\newcommand{\be} {\begin{eqnarray*}}
\newcommand{\ee} {\end{eqnarray*}}
\theoremstyle{definition}
\newtheorem{theorem}{Theorem}[section]
\newtheorem{proposition}[theorem]{Proposition}
\def\*#1{\bm{#1}}
\title{A Generalized Estimating Equation Approach to Network Regression}
\author[1]{Riddhi Pratim Ghosh}
\author[2]{Jukka-Pekka Onnela}
\author[3]{Ian Barnett}
\affil[1]{Department of Mathematics and Statistics, Bowling Green State University}
\affil[2]{Department of Biostatistics, Harvard University}
\affil[3]{Department of Biostatistics, University of Pennsylvania}
\date{}
\begin{document}
\maketitle
\begin{abstract}
Regression models applied to network data, where node attributes are the dependent variables, pose methodological challenges. As has been well studied, naive regression neither properly accounts for network community structure, nor does it account for the dependent variable acting as both model outcome and covariate. To address this methodological gap, we propose a network regression model motivated by the important observation that controlling for community structure can, when a network is modular, significantly account for meaningful correlation between observations induced by network connections. We propose a generalized estimating equation (GEE) approach to learn model parameters based on node clusters defined through any single-membership community detection algorithm applied to the observed network. We provide a necessary condition on the network size and edge formation probabilities to establish the asymptotic normality of the model parameters under the assumption that the graph structure is a stochastic block model. We evaluate the performance of our approach through simulations and apply it to estimate the impact of the county-level commercial airline transportation network on COVID-19 incidence rates and on net financial aid given or received.

\end{abstract}

\noindent%
{\it Keywords:} network regression; transportation networks; generalized estimating equations; COVID-19
\vfill

\newpage

\section{Introduction}

For nodes of a network, some attributes may be influenced by the same attributes of their network neighbors. Network regression aims to model node attributes through regression while accounting for the role of these neighboring nodes.
Two important methodological challenges of network regression are (1) accounting for correlation induced by network structure, and (2) allowing for node attributes to appear as both the dependent and independent variables for different observations in the same model.  For example, network regression was applied to study the person-to-person spread of obesity where \cite{christakis2007spread} estimated the degree to which obesity is spread through social ties as opposed to being explained by homophily, the phenomenom often typified by the saying ``birds of a feather flock together'' \citep{shrum1988friendship, igarashi2005gender, tifferet2019gender}. Sensitivity analyses suggest that contagion effects for obesity and smoking cessation are reasonably robust to possible latent homophily or environmental confounding while those for happiness and loneliness are somewhat less so \citep{vanderweele2011sensitivity}. To further investigate the causal relationship, \cite{shalizi2011homophily} provided three factors underlying such interaction: homophily, or the formation of social ties due to matching individual traits; social contagion; the causal effect of an individual's covariates on his or her measurable responses; and an individual's response. This has led to the development of a more general framework of network regression that models observed individual node attributes with covariates and the network interactions \citep{hoff2002latent, hoff2021additive}. Here we will show how community structure can be exploited to solve, or at least alleviate, the above stated methodological challenges. This introduces an additional challenge and source of variability: estimation of community membership. Some community detection methods have been developed to incorporate covariate information \citep{binkiewicz2017covariate, mu2022spectral}.  While many existing network regression approaches assume community labels are known in order to simplify the estimation of covariate effects, this assumption seldom holds in practice. In more realistic settings, community labels are unobserved with differences in covariates existing throughout the communities, and this makes model inference challenging.  

There are many cases of network regression that assume community membership labels to be known. For \cite{aukett1988gender} showed that gender difference plays an important role in friendship patterns: women shows a preference for a few, closer, intimate same-sex friendships based on sharing emotions whereas men build up friendship based on the activities they do together. In another case, \cite{staber1993friends} studied how women and men form entrepreneurial relationships, concluding that women's networks are wider with more strangers and higher proportion of cross-sex ties. In each of these network regression examples, community labels are assumed known which alleviates the inferential task of finding the effects of covariates and learning community structures. However, in other scenarios, community membership labels cannot be assumed. For example, we will consider the network regression problem of seeing the impact of air travel network flow between countries on COVID-19 incidence rates and net foreign aid, in which case the community structure of the commercial air transit networks must be estimated.

Generalized estimating equations (GEE) \citep{liang1986longitudinal, zeger1986analysis} provide a popular approach that is often used to analyze longitudinal and other types of correlated data \citep{burton1998extending, diggle2002analysis, mandel2021neural}. \cite{zhou2016regression} proposed hybrid quadratic inference method exploiting both prior and data-driven correlations among network nodes for assessing relationships between multi-dimensional response variables and covariates that are correlated within a network. Here, we propose first performing community detection in order to estimate community membership, and then using a GEE approach to account for the  resulting estimated community structure. This approach is general and agnostic to the community detection algorithm used so long as each node belongs to at most one community. (This excludes so-called soft partitioning methods, which may allow a node to belong to multiple communities.) Given that network regression needs to account for correlation between attributes from nodes that are connected, GEEs allows for arbitrary correlation between nodes within the same community. Of note, this approach is best suited for highly modular networks so that the GEE assumption of independence between different communities is more accurate, though we explore its performance in less modular networks with more between-community mixing.

The rest of this article is organized as follows. In \Cref{Methods} we introduce our network regression model along with our GEE extension, and we describe the transportation network we use to model country-specific COVID-19 incidence rates and the financial aid received by countries. In \Cref{Results} we present the theoretical results followed by extensive simulation results and real data analysis. Finally, \Cref{Discussion} concludes the article with a  discussion.

\section{Methods}\label{Methods}

In this section, we introduce our network regression model and describe the air travel networks for each month of the first quarter of 2020 (the beginning of the COVID-19 pandemic) with country-specific information such as COVID-19 incidence rate, financial aid received or given, GDP, and population size. Next, we present a generalized estimating equation (GEE) approach to network regression that accounts for network-induced correlation between observations.

\subsection{Model and notation}\label{Notations and model}

We consider a directed network of $n$ nodes and an $n \times n$ adjacency matrix $\bm A=(a_{ij})$ with $0$'s on the diagonal (i.e. no self-loops).  
 We denote the feature variable of the  $i$th node by $y_i$, the $l$-dimensional vector of covariates by $\bm{\alpha}$, and the corresponding design matrix by $\bm{x}_i$. Denoting the coefficient of interest by $\beta$, the network regression model for node attribute $y_i$ we consider the generalized linear model framework:
 \begin{align}\label{original_eqn1}
& g(\mu_i)=\bm{\alpha}^{\top}\bm{x}_i + \beta \cdot \sum_{j \ne i}A_{ji}g^{-1}(\mu_j)/(n-1) ,
\end{align}
where $\mu_i=E(y_i|\bm x_i, \bm A)$, $\bm{\alpha} \in \mathbb{R}^l$, $g(\cdot)$ is the link function, and $g^{-1}(\cdot)$ its inverse. Next we consider the continuous node attribute and binary node attribute cases.

\textit{(i) Regression with continuous outcome:} With the identity link function, this becomes
\begin{align}\label{cont_outcome_eqn}
&\mu_i=\bm{\alpha}^{\top}\bm{x}_i + \beta \cdot \sum_{j \ne i}A_{ji}\mu_j/(n-1).
\end{align}
A slightly alternative approach is to directly model the $y_j$ themselves in place of $\mu_j$.
\begin{align}\label{ar_model}
& y_i=\bm{\alpha}^{\top}\bm{x}_i + \beta \cdot \sum_{j \ne i}A_{ji}y_j/(n-1)+\epsilon_i,
\end{align}
where $\epsilon_i$s iid zero-mean normal error terms.
One can note that this  model is reminiscent of the first-order autoregressive spatial model of \cite{kelejian1998generalized} which frequently contains a spatial lag of the dependent variable as a covariate that is spatially autoregressive. In contrast to the naive regression model, one difficulty with the systematic component of \eqref{ar_model} is that it now contains $y_j$, thus forcing it to play the role of both the independent and  dependent variable. 

\textit{(ii) Regression with binary outcome:} For the binary response variable $y_i$ taking values in $\{0,1\}$, one can use logit link function which gives the following model:

\begin{align}\label{binary_outcome_eqn}
\text{logit}(\mu_i)=\bm{\alpha}^{\top}\bm{x}_i + \beta \cdot \sum_{j \ne i}A_{ji}\text{logit}^{-1}(\mu_j)/(n-1),
\end{align}
where $\text{logit}(\mu_i)=\text{log}(\mu_i/(1-\mu_i))$, $\text{logit}^{-1}(\mu_j)=\text{exp}(\mu_j)/(1+\text{exp}(\mu_j)$ and  $\mu_i=P(y_i=1)$.

Using vector and matrix notation, the generic model in Equation \eqref{original_eqn1} can also be written as
\begin{align*}
  & g(\bm \mu) = \bm X^{\top} \bm \alpha + \beta \cdot \bm Ag^{-1}(\bm \mu)/(n-1),
\end{align*}
where $g(\bm \mu)=[g(\mu_1),g(\mu_2),\cdots, g(\mu_n)]^{\top}$, $g^{-1}(\bm \mu)=[g^{-1}(\mu_1),g^{-1}(\mu_2),\cdots, g^{-1}(\mu_n)]^{\top}$ are the concatenated vectors of the link function and its inverse, and $\bm X=[\bm x_1: \bm x_2: ...: \bm x_n]$, is the matrix of covariates. 


\subsection{Data description}\label{Data description}

With 622 million confirmed cases and 6.5 million deaths globally as of October 01, 2022, the COVID-19 pandemic has had a tremendous impact on the world, shrinking the global economy by 5.2\%, the largest recession in the history post World War II \citep{world2020covid}. The travel bans in places worldwide have severely affected the tourism industry, with estimated losses of 900 billion to 1.2 trillion USD and tourism down 58\%-78\% \citep{le2022framework}. The airline industry has also suffered heavily, with 43 airlines declaring bankruptcy and 193 of 740 European airlines at risk of closing. Here we focus on the start of the pandemic covering the transition to travel bans across the world to study the relative effectiveness of travel bans for controlling and contributing to COVID-19 incidence rates. 

We use pandemic data from the Johns Hopkins University coronavirus data repository through April 30, 2020, \citep{CSSEGISandData}. Flight data are from the Official Airline Guide (OAG) \citep{strohmeier2021crowdsourced}. Because only data for January and February 2020 are available from OAG, we used the estimated fight data for other time periods using the OpenSky Network database \citep{schafer2014bringing, strohmeier2021crowdsourced}. This database tracks the number of fights from one country to another over time, which we use to estimate country-to-country flight data for other months. We include as covariates the GDP, total population, and percentage of the urban population for each country in our network.  Constructing a network based on by the flight data and incorporating the above country-specific attributes such as GDP, population, etc. as covariates through $\bm \alpha$ in  \eqref{cont_outcome_eqn}, we aim to estimate the effectiveness of travel bans through $\beta$ in model \eqref{cont_outcome_eqn}. In a subsequent analysis of binary response, we investigate the impact of network data from January 2020, along with other covariates, on financial aid using the model described in Equation \eqref{binary_outcome_eqn}. The financial aid data from 2019, representing financial aid received or given by countries in US dollars (million), has been dichotomized to 0 and 1 after thresholding.


\subsection{Generalized estimation equation (GEE) approach}\label{GEE approach}
Our network contains $K$ communities where community $k$ is defined by 
$$E_k=\{i: g_i=k\},$$
where index $g_i$ represents the community membership  of node $i$, $|E_k|=n_k$ (number of nodes in community $k$) and $\sum n_k=n$. Let $\bm y_k$ denote the concatenated vector of $y_{ij}$s, and $\bm X_k=[\bm x_1, \bm x_2, ...,\bm x_{n_k}]$ denote the $l\times n_k$ sub-matrix of covariates corresponding to cluster $k$.

To fit our network regression model in the GEE framework, we use network communities as clusters in the GEE, and we use the following equation to model the node attributes of members of the $k$th cluster:  
\begin{align}\label{network_reg_gee}
&  g(\bm \mu_k) = \bm X_k^{\top}\bm \alpha + \beta \bm Z_k ,  
\end{align}
where  $\bm A_k$ is the $n_k \times n_k$ sub-matrix of $\bm A$  pertaining to the cluster $k$, $\bm \mu_k$ is the marginal mean of the response pertaining to the cluster $k$, i.e., $\bm \mu_k=E(\bm y_k\vert \bm X_k,\bm A_k)$, and $\bm Z_k=\bm A_k g^{-1}(\bm \mu_k) /(n-1)$. For the model in \eqref{ar_model} the explicit formula for $\bm \mu_k$ is:
$$\bm \mu_k=E(\bm y_k|\bm X_k,\bm A_k)=(\bm I_{n_k}-\beta \bm A_k /(n-1))^{-1}\bm X_k^{\top}\bm \alpha, \;\;\;\ k=1,2,...,K,$$
where $\bm I_{n_k}$ is the identity matrix of order $n_k$.

Adopting a GEE approach \citep{liang1986longitudinal}, the resulting estimating equation is given by
\begin{align}\label{gee_objective_fn}
& \sum_{k=1}^{K}\bm D_k^{\top} \bm V_k^{-1}(\bm y_k-\bm \mu_k)=\bm 0, \;\;\ k=1,2,...,K,
\end{align}
where $\bm D_k=\frac{\partial \bm \mu_k}{\partial (\beta,\bm \alpha)^{\top}}$ is of dimension $n_k \times (l+1)$ and $\bm V_k$ is the $n_k\times n_k$ working covariance matrix of $\bm y_k$. The form for $\bm D_k$ using chain rule is:
\begin{align*}
\bm D_k = \underbrace{\frac{\partial \bm\mu_k}{\partial g(\bm \mu_k)}}_{n_k\times n_k} \left[\underbrace{\frac{1}{n-1}\bm A_k\frac{\partial g^{-1}(\bm\mu_k)}{\partial \beta}}_{n_k\times 1},  \underbrace{\bm X_k^{\top}+\frac{\beta}{n-1}\bm A_k\frac{\partial g^{-1}(\bm \mu_k)}{\partial \bm\alpha}}_{n_k\times l}\right].
\end{align*}

For the regression model specified in Equation \eqref{ar_model}, the explicit formula for $\bm D_k$ is:

$$\bm D_k=\left[-\underbrace{(\bm I_{n_k}-(\beta/(n-1)) \bm A_k)^{-1}\bm A_k(\bm I_{n_k}-(\beta/(n-1)) \bm A_k)^{-1}\bm X^{\top}\bm \alpha}_{n_k\times 1} ,  \underbrace{(\bm I_{n_k}-(\beta/(n-1)) \bm A_k)^{-1}\bm X_k^{\top}}_{n_k \times l}\right].$$
One can note that $\bm D_k$ consists of two partitioned matrices where the first one corresponds to the network parameter $\beta$ and the second one is due to the covariate $\bm \alpha$. We can solve for $\hat{\bm \alpha}$ and $\hat{ \beta}$ in equation \eqref{gee_objective_fn} through iterative reweighted least squares, and use the robust sandwich covariance estimator to perform inference on $\bm \alpha$ and $ \beta$.
\section{Results}\label{Results}

\subsection{Theoretical results}\label{Theoretical results}

In this section, we prove the asymptotic normality of the resulting GEE estimator for $\beta$ and $\bm \alpha$ jointly. Towards this goal, we assume constant probabilities of edge formation between and within communities,  where we denote these quantities by $p$ and $q$, respectively, as in a stochastic block model \citep{holland1983stochastic}. Our proof of asymptotic normality hinges on Theorem 2 of  \cite{liang1986longitudinal} which establishes the asymptotic normality of the regression parameter in the classical GEE approach under the assumption  that the  correlation parameter appropriately scaled by the number of communitites is consistent. Our primary distinction from this approach is that we must account for probabilities of edge formation instead of correlation between observations. Therefore, we first establish the consistency of $p$ and $q$ following Proposition 1 of \cite{chen2021analysis} and subsequently show asymptotic normality of $(\hat{\beta},
\hat{\bm \alpha})$ from the estimating equation \eqref{gee_objective_fn}.

\subsubsection{Consistency of $\hat{p}$ and $\hat{q}$}\label{Consistency results}

\begin{proposition}\label{const_p_q}
Consider a  network generated from a stochastic block model of $K$ communities of size $m$ so that total number of nodes is $n=Km$.  Assume that $Km^{\gamma}p \rightarrow p^{*}$ and $K^2m^{\gamma}q \rightarrow q^{*}$ as $m \rightarrow \infty, K \rightarrow \infty$, where $p^{*}$ and $q^{*}$ are positive fixed constants, and $\gamma \in [0,2)$, then $m^{1+\gamma/2}K^{1/2}(\hat{p}-p)$ and $m^{1+\gamma/2}K(\hat{q}-q)$ are both $o_P(1)$, where $\hat{p}$ and $\hat{q}$ are the empirical estimates of $p$ and $q$ from the adjacency matrix following a community detection algorithm. 
\end{proposition}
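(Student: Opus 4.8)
The plan is to express $\hat p$ and $\hat q$ as sample proportions of the Bernoulli entries of $\bm A$ over the between- and within-community node pairs induced by the \emph{estimated} partition, to reduce to the true partition via a community-recovery guarantee, and then to control the resulting centered oracle proportions by Chebyshev's inequality.

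\textbf{Step 1 (reduction to the true partition).} I would invoke Proposition 1 of \cite{chen2021analysis}: under the scalings $Km^{\gamma}p\to p^{*}$, $K^{2}m^{\gamma}q\to q^{*}$ it should yield that the community detection algorithm recovers the true block partition $\{E_k\}$ up to relabeling with probability tending to one. On that event, $\hat p$ and $\hat q$ coincide with the oracle proportions $\hat p_0=\binom{K}{2}^{-1}m^{-2}\sum_{g_i\ne g_j}A_{ij}$ and $\hat q_0=\bigl(K\binom{m}{2}\bigr)^{-1}\sum_{g_i=g_j,\,i\ne j}A_{ij}$. Because the target is convergence in probability, for any $\epsilon>0$ we have $\bbP\bigl(|m^{1+\gamma/2}K(\hat q-q)|>\epsilon\bigr)\le\bbP(\text{recovery fails})+\bbP\bigl(|m^{1+\gamma/2}K(\hat q_0-q)|>\epsilon\bigr)$, and similarly for $\hat p$, so it suffices to treat $\hat p_0$ and $\hat q_0$.

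\textbf{Step 2 (concentration for the oracle proportions).} Under the stochastic block model, $\hat q_0$ is an average of $N_q=K\binom{m}{2}\asymp Km^{2}/2$ i.i.d.\ $\mathrm{Bernoulli}(q)$ variables, so $\bbE\hat q_0=q$ and $\Var(\hat q_0)=q(1-q)/N_q$. Hence $\Var\bigl(m^{1+\gamma/2}K(\hat q_0-q)\bigr)=m^{2+\gamma}K^{2}q(1-q)/N_q\asymp 2(1-q)\,m^{\gamma}Kq$, and $K^{2}m^{\gamma}q\to q^{*}$ gives $m^{\gamma}Kq\asymp q^{*}/K\to 0$; Chebyshev then yields $m^{1+\gamma/2}K(\hat q_0-q)=o_P(1)$. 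The identical computation with $N_p=\binom{K}{2}m^{2}\asymp K^{2}m^{2}/2$ gives $\Var\bigl(m^{1+\gamma/2}K^{1/2}(\hat p_0-p)\bigr)\asymp 2(1-p)\,m^{\gamma}p/K$, and $Km^{\gamma}p\to p^{*}$ forces $m^{\gamma}p/K\asymp p^{*}/K^{2}\to 0$, so $m^{1+\gamma/2}K^{1/2}(\hat p_0-p)=o_P(1)$ as well. The restriction $\gamma\in[0,2)$ enters mainly through Step 1, as it controls how sparse the model is (the expected within- and between-degrees scale like $m^{1-\gamma}/K^{2}$ and $m^{1-\gamma}$) and hence whether the recovery guarantee applies.

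\textbf{Main obstacle.} The delicate part is Step 1, not the moment bookkeeping of Step 2: the scalings make the graph sparse ($p,q\to 0$), so one must check that the invoked recovery result genuinely applies in this regime, and --- if recovery is only approximate, with misclassified fraction $\rho$ --- that the bias it induces in $\hat p$ and $\hat q$, of order $\rho K(p\vee q)$ once the mislabeled node pairs are counted, is still $o_P$ of the reciprocals $m^{-1-\gamma/2}K^{-1/2}$ and $m^{-1-\gamma/2}K^{-1}$ of the prescribed normalizations. Once exact (or sufficiently fast near-exact) recovery is secured, the remainder is the routine Chebyshev argument above.
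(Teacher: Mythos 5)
Your argument is essentially correct and reaches the same conclusion as the paper, but by a different route. The paper works directly with the true within- and between-community pair sets $S$ and $S'$, verifies the Lindeberg condition to get a central limit theorem for $\sum_{i,j\in S}A_{ij}$ and $\sum_{i,j\in S'}A_{ij}$ (using $\gamma<2$ precisely so that $s_{m_p}^2=2K\binom{m}{2}p(1-p)\asymp m^{2-\gamma}p^{*}\to\infty$), and then rescales the standardized sums by $2K\binom{m}{2}$ and $K(K-1)m^2$ to read off the rates. Your Chebyshev argument is more elementary and entirely sufficient for an $o_P(1)$ claim, and your Step 1 addresses something the paper silently skips: the proposition speaks of estimates ``following a community detection algorithm,'' yet the paper's proof treats the partition as known, so your reduction to the oracle proportions on the exact-recovery event is a genuine (and welcome) addition rather than a redundancy, even though you leave the applicability of the recovery guarantee in the sparse regime as an unverified invocation. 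Two bookkeeping slips to fix: in Step 1 your formulas for $\hat p_0$ and $\hat q_0$ are interchanged ($\hat p_0$ should average over the $K\binom{m}{2}$ within-community pairs and $\hat q_0$ over the $\binom{K}{2}m^2$ between-community pairs), and consequently your counts $N_p$ and $N_q$ in Step 2 are also swapped; with the correct counts one gets $\Var\bigl(m^{1+\gamma/2}K^{1/2}(\hat p_0-p)\bigr)\asymp m^{\gamma}p\asymp p^{*}/K\to 0$ and $\Var\bigl(m^{1+\gamma/2}K(\hat q_0-q)\bigr)\asymp m^{\gamma}q\asymp q^{*}/K^{2}\to 0$, so the conclusion survives under either count and the error is cosmetic rather than fatal. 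Your closing observation that $\gamma\in[0,2)$ is not needed for the Chebyshev step is also accurate: in the paper that restriction is what makes the CLT normalizations diverge, whereas in your version it would only be needed for the recovery guarantee.
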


\begin{proof}
See the appendix in \Cref{proof_const_p_q}.
\end{proof}

Guided by the above proposition, we establish the asymptotic normality of the model parameters in the following theorem. The key difference with the classical formula is the inclusion of the community size in the covariance coming from the consistency of $p$ and $q$. Two remarks are in order.

\textit{Remark 1:} It is instructive to note the inclusion of the number of communities $K$ in the above expression- for consistency of $\hat{p}$ the order is $K^{1/2}$ while for $\hat{q}$ the order is $K$. In other words, this implies that the estimate of between community edge probability benefits from increase number of communities more than the estimate of within community edge probability.

\textit{Remark 2:} While doing asymptotics on the network size $n$, i.e. $n \rightarrow \infty$, we impose the condition that both community size $m$, and the number of communities $K$ tend to infinity. This rules out the existence of a large community can diminish the network topology of other local communities.

\subsubsection{Asymptotic normality of $\hat{\beta}$}

\begin{theorem}\label{asym_norm}
Under the conditions of \Cref{const_p_q}  $K^{1/2}m^{1+\gamma/2}\big((\hat{\beta},\hat{\bm \alpha})-(\beta,\bm \alpha)\big)^{\top}$ is asymptotically multivariate normal with zero mean and covariance given by
$$V=\underset{K\rightarrow \infty}{\text{lim}}K m^{2+\gamma}\Big(\sum_{k=1}^{K}\bm D^{\top}_k\bm V_k^{-1}\bm D_k\Big)^{-1}\Big(\sum_{k=1}^{K}\bm D^{\top}_k\bm V_k^{-1}cov(\bm Y_k)\bm V_k^{-1}\bm D_k\Big)\Big(\sum_{k=1}^{K}\bm D^{\top}_k\bm V_k^{-1}\bm D_k\Big)^{-1}.$$
\end{theorem}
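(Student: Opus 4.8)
The plan is to follow the route of Theorem 2 of \cite{liang1986longitudinal}, with the twist that the role played there by a consistently estimated working–correlation parameter is played here by the edge probabilities $(\hat p,\hat q)$, whose rates of convergence are exactly what \Cref{const_p_q} delivers. Throughout I would argue conditionally on the realized block-model adjacency submatrices $\{\bm A_k\}_{k=1}^{K}$. Standard spectral concentration for stochastic block models (within community $k$, $\|\bm A_k\|_{\mathrm{op}}$ is of order $mq+\sqrt{mq\log m}$ with high probability) together with the scaling $K^2m^{\gamma}q\to q^{*}$ and $n=Km$ gives $\|\beta\bm A_k/(n-1)\|_{\mathrm{op}}\to 0$; hence the Neumann inverse $(\bm I_{n_k}-\beta\bm A_k/(n-1))^{-1}$ appearing in $\bm\mu_k$ and in the explicit formula for $\bm D_k$ is well defined with operator norm bounded away from infinity, and $\bm D_k$ and $\bm V_k^{-1}$ have norms controlled uniformly in $k$ on an event of probability tending to one. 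All subsequent estimates are understood to hold on that event.

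The first step is a mean-value expansion of \eqref{gee_objective_fn}. Writing $\bm\theta=(\beta,\bm\alpha)^{\top}$ and $\bm U_K(\bm\theta,p,q)=\sum_{k=1}^{K}\bm D_k^{\top}\bm V_k^{-1}(\bm y_k-\bm\mu_k)$, expanding about the truth $\bm\theta_0$ yields
\[
\bm 0=\bm U_K(\bm\theta_0,\hat p,\hat q)-\Big(\sum_{k=1}^{K}\bm D_k^{\top}\bm V_k^{-1}\bm D_k+\bm R_K\Big)(\hat{\bm\theta}-\bm\theta_0),
\]
where $\bm R_K$ collects second-order terms and is shown, by the uniform norm bounds above and consistency of $\hat{\bm\theta}$ (itself obtained by a routine Z-estimator argument once the score is shown asymptotically Gaussian), to be negligible relative to the ``bread'' $\sum_k\bm D_k^{\top}\bm V_k^{-1}\bm D_k$ after normalization. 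A second Taylor expansion handles the plugged-in edge probabilities: $(\hat p,\hat q)$ enter only through $\bm V_k$ (and through $\mathrm{cov}(\bm Y_k)$), so $\bm U_K(\bm\theta_0,\hat p,\hat q)=\bm U_K(\bm\theta_0,p,q)+\bm B_K(\hat p-p)+\bm C_K(\hat q-q)+\text{h.o.t.}$, with $\|\bm B_K\|,\|\bm C_K\|=O(K)$. The rates $m^{1+\gamma/2}K^{1/2}(\hat p-p)=o_P(1)$ and $m^{1+\gamma/2}K(\hat q-q)=o_P(1)$ from \Cref{const_p_q} then make the contribution of these terms to $K^{1/2}m^{1+\gamma/2}(\hat{\bm\theta}-\bm\theta_0)$ equal to $o_P(1)$; this is precisely the substitute for Liang and Zeger's $\sqrt{K}$-consistency requirement on the nuisance parameter.

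The third step is a central limit theorem for the ``meat.'' Under the stochastic block model the response blocks $\{\bm y_k\}_{k=1}^{K}$ are independent across communities, so $\bm U_K(\bm\theta_0,p,q)=\sum_{k=1}^{K}\bm\xi_k$ with $\bm\xi_k=\bm D_k^{\top}\bm V_k^{-1}(\bm y_k-\bm\mu_k)$ independent, mean zero under correct specification of $\bm\mu_k$, and $\mathrm{cov}(\bm\xi_k)=\bm D_k^{\top}\bm V_k^{-1}\mathrm{cov}(\bm Y_k)\bm V_k^{-1}\bm D_k$. Applying the multivariate Lindeberg--Feller CLT to the triangular array $\{\bm\xi_k\}$ normalized by $(Km^{2+\gamma})^{-1/2}$, together with a law of large numbers for the normalized bread $(Km^{2+\gamma})^{-1}\sum_k\bm D_k^{\top}\bm V_k^{-1}\bm D_k$ (assumed to converge to a positive-definite limit), and Slutsky's theorem applied to the expansion of Steps 1--2, gives that $K^{1/2}m^{1+\gamma/2}(\hat{\bm\theta}-\bm\theta_0)$ is asymptotically $\Gauss(\bm 0,V)$ with $V$ of the stated sandwich form. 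The prefactor $Km^{2+\gamma}$ in $V$ is just $(K^{1/2}m^{1+\gamma/2})^{2}$, the square of the normalization constant, which absorbs the implicit powers of $m$ and $K$ carried by $\bm D_k$ and $\bm V_k$ and leaves a nondegenerate limit.

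The main obstacle is verifying the Lindeberg (or a Lyapunov) condition for $\{\bm\xi_k\}$ in the regime where the block size $m\to\infty$ simultaneously with $K\to\infty$: unlike the classical GEE setting with bounded cluster sizes, each $\bm\xi_k$ is a growing-dimensional aggregate, so one needs uniform-in-$k$ moment control on $\mathbb{E}\|\bm y_k-\bm\mu_k\|^{2+\delta}$ (from moment conditions on the errors $\epsilon_i$ in \eqref{ar_model}, or the Bernoulli structure in \eqref{binary_outcome_eqn}) coupled with uniform operator-norm bounds on $\bm V_k^{-1}$ and $\bm D_k$ — the latter hinging on the geometric convergence of the Neumann series under the SBM scaling. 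A secondary obstacle is guaranteeing that the limit $V$ exists at all; this requires positing convergence of the normalized design moments $(Km^{2+\gamma})^{-1}\sum_k\bm D_k^{\top}\bm V_k^{-1}\bm D_k$ and $(Km^{2+\gamma})^{-1}\sum_k\bm D_k^{\top}\bm V_k^{-1}\mathrm{cov}(\bm Y_k)\bm V_k^{-1}\bm D_k$, which I would add as mild regularity assumptions (e.g., stationarity of the covariate process across communities) if they are not already implied by the block-model structure.
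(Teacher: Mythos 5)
Your proposal follows essentially the same route as the paper's proof: a Taylor expansion of the estimating function $\sum_k U_k$ in the nuisance edge probabilities $(p,q)$, using the rates $m^{1+\gamma/2}K^{1/2}(\hat p-p)=o_P(1)$ and $m^{1+\gamma/2}K(\hat q-q)=o_P(1)$ from \Cref{const_p_q} to show the plug-in correction is asymptotically negligible, and then an appeal to the Liang--Zeger Theorem~2 argument (independence of clusters, CLT for the score, sandwich covariance) for the score evaluated at the true $(p,q)$. Your version is in fact more explicit than the paper's sketch --- notably in flagging the Lindeberg condition for growing cluster sizes, the operator-norm control of the Neumann inverse, and the regularity needed for the limit $V$ to exist --- but these are elaborations of, not departures from, the published argument.
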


\begin{proof}
See the appendix in \Cref{asym_norm_pf}. 
\end{proof}
\textit{Remark 3:} The variance formula involves the term $m^{2+\gamma}$, where $m$ is the community size. If $m$ is large, then the covariance will increase at a rate $m^{2+\gamma}$. This is reminiscent
of the well-known fact that sandwich estimator $\hat{V}$ of the covariance of $(\beta,\bm \alpha)^{\top}$ is not stable if $m$ is large relative to $K$. In essence, if $m$ grows at a similar rate to $K$ the sandwich estimator becomes and unstable estimator of covariance. In practice this implies that the GEE approach works best when the network contains many smaller communities rather than only a few larger communities.


\subsection{Simulations}\label{Simulations}
We simulate a network of $n$ nodes having balanced communities of size $m=10$ via a stochastic block model and vary $n$ in $\{200, 400\}$ with the number of communities ($K$) being $20$ and $40$ for both values of $n$, respectively.  Let $p$ and $q$ denote the within community and between community probabilities of an edge formation as in a stochastic block model. In each setting, we vary $(p, q)$ across the following values: $(0.8, 0)$, $(0.7, 0.1)$, $(0.6, 0.2)$, and $(0.5, 0.3)$ representing a range from stronger to weaker community structure.

\subsubsection{Estimation of $\beta$ and $\alpha$} 
The choice of true model parameters and the corresponding data-generating process are detailed to span networks with varying degrees of modularity. We set  $\beta_0=0.5$ and $\bm \alpha_0=(1, 1, 1, 1, 0.5, 0.5, 0.5, -0.5, -0.5, 2)^{\top}$ ($l=10$ in equation \eqref{original_eqn1}).  We simulate the $l \times n$ design matrix $\bm X$ by a multivariate normal distribution in the following manner. The $j$th column of $\bm X$, if it belongs to community $k$ ($k=1,2,...,K$), follows MVN$((k/10) \bm 1_l, 0.01 \bm I_l)$, where $\bm 1_l$ and $\bm I_l$ are the vector with 1s and identity matrix of dimension $l$, respectively. In each setting, the adjacency matrix $\bm A$ of dimension $n$ is simulated by the stochastic block model which has $K$ communities of the aforementioned size such that within and between community edge probabilities are $p$ and $q$, respectively. Finally, the response variable $y_i$ is generated as follows:
\begin{enumerate}
\item[] (i) Continuous outcome: We use Equation \eqref{cont_outcome_eqn} to generate $\mu_i$ first, and then $y_i$ is generated from a normal distribution with mean $\mu_i$ and variance 0.01.
\item[] (ii) Binary outcome: First, Equation \eqref{binary_outcome_eqn} is used to generate $\text{logit}(\mu_i)$, and then $y_i$ is generated from a Bernoulli distribution with parameter $\mu_i$.
\end{enumerate}

To estimate $\beta$ and $\bm \alpha$, we first perform a community detection on the graph obtained from the adjacency matrix $\bm A$ as in \citet{rosvall2007maps}. Next, with the resulting communities we fit a GEE using the \textit{geepack} R package \citep{geeglm} and report the estimates of bias and variance by taking the average of over $B=1000$ replications.

The squared bias and variance of the estimated $\beta$ increase as the networks become less modular (i.e., as $q$ increases) for both our GEE approach and naive least squares for continuous outcomes (see \Cref{cont_plot}), as well as for our GEE approach and generalized linear model (GLM) (see \Cref{binary_plot}). The naive least squares and GLM methods do not assume any community structure and report the parameter estimates by assuming independence between observations, using equation \eqref{original_eqn1} directly. Our simulation results confirm that the parameter estimates of our GEE approach coincide with the least squares method or GLM as expected. The differnce comes in the values of the standard errors. 

From \Cref{cont_plot} and \Cref{binary_plot} we see that less modular networks are more difficult to fit in general based on the increase in bias for both methods. Despite this, our GEE approach exhibits less bias for less modular network than naive least squares or GEE, demonstrating that controlling for correlation within communities is effectively accomplished by GEE. This also exposes the weakness of our GEE approach: if the network is not modular with high degrees of mixing between communities (i.e. large $q$) the GEE framework cannot accommodate this due its assumption of independence between communities. However, even when $q$ is large we still observe that GEE somewhat mitigates the impact of correlation induced by the network at least partially, which explains why its bias is smaller than that of naive least squares or GLM. Essentially, even when the network structure suggests the GEE assumptions are incorrect, one is still generally better off partially accounting for network structure using our GEE approach rather than ignore community structure altogether.

The smaller standard errors demonstrated by least squares in \Cref{cont_plot} and by GLM in \Cref{binary_plot} reflect the inaccuracy of the method. By ignoring community structure and assuming independence between observations in network regression settings, we expect to see anti-conservative hypothesis tests, too-narrow confidence intervals, and general overconfidence. This overconfidence is reflected in the too-small standard errors for naive least squares as well as in the anti-conservative Type I errors we demonstrate next.

\begin{figure}[h]
\centering
\includegraphics[width=185mm]{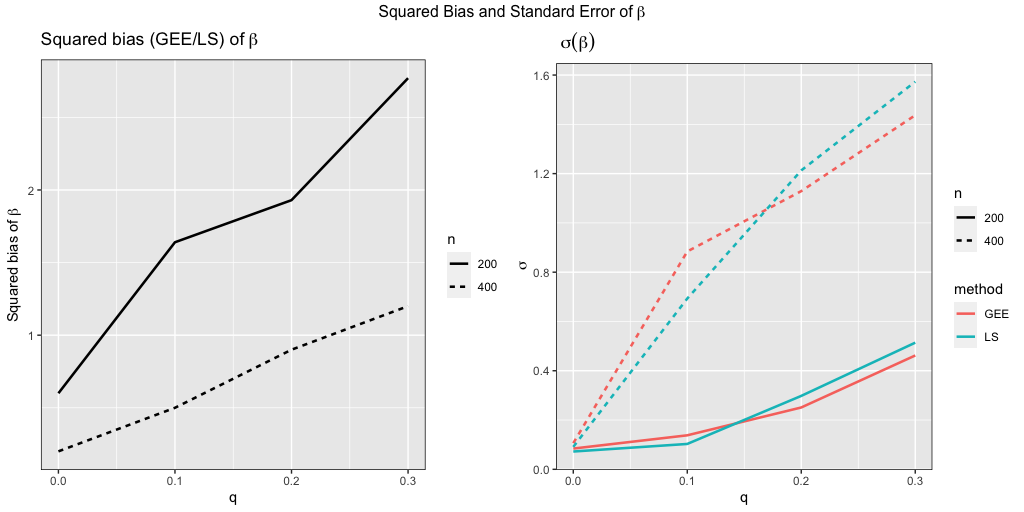}
\caption{\textbf{Bias squared and standard error of estimates of $\beta$ with varying degrees of network modularity for continuous outcomes.} In order to estimate $\beta$, $B=1000$ replications were performed for $n=200, 400$ with the average degree 7 and 16 respectively. The values of $(p, q)$ are varied over $\{(.8,0), (.7, .1), (.6,.2), (.5,.3)\}$. Biases are equal for the GEE and LS methods since they yield the same parameter estimates regardless of the network size $n$.} 
\label{cont_plot}
\end{figure}

\begin{figure}[h]
\centering
\includegraphics[width=185mm]{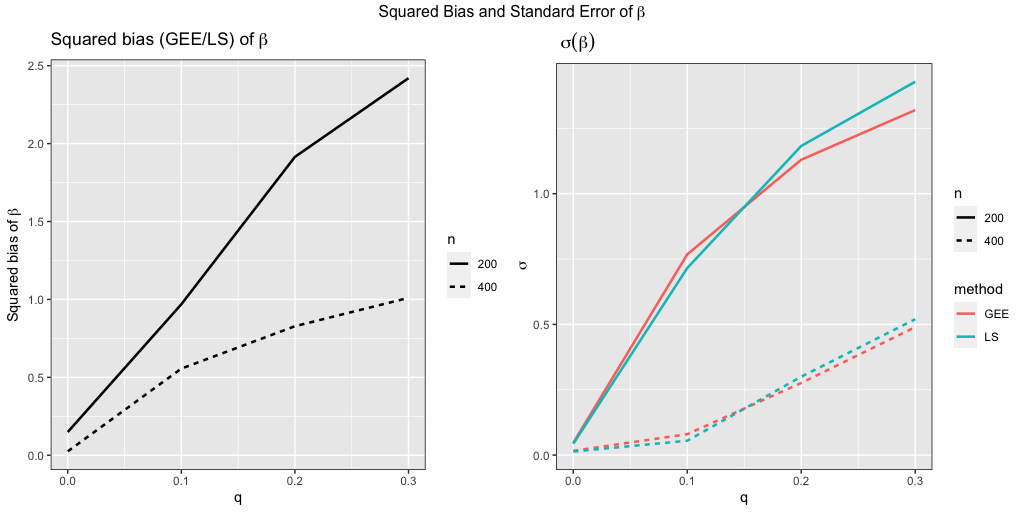}
\caption{\textbf{Bias squared and standard error of estimates of $\beta$ with varying degrees of network modularity for discrete outcomes.} In order to estimate $\beta$, $B=1000$ replications were performed for $n=200, 400$ with the average degree 7 and 16 respectively. The values of $(p, q)$ are varied over $\{(.8,0), (.7, .1), (.6,.2), (.5,.3)\}$. Biases are equal for the GEE and LS methods since they yield the same parameter estimates regardless of the network size $n$.} 
\label{binary_plot}
\end{figure}
\subsubsection{Hypothesis testing of $\beta$}
We consider an approach to testing the hypothesis of $H_0:\beta=0$ against the alternative $H_A:\beta\ne 0$. We perform a simulation study to obtain Type I error in a variety of network structures. We consider two working correlation structures for our GEE model: independence and exchangeable. 
First, we simulate our data as in \Cref{Simulations} with $\beta=0$. We obtain an empirical null distribution by replicating this procedure $B=1000$ times to obtain $\hat{\beta}^{(1)}, \hat{\beta}^{(2)},..., \hat{\beta}^{(B)}$. We estimate the P-value by $\sum_{b=1}^{B}I(\vert \hat{\beta}^{(b)}\vert<\hat{\beta})/B$.  We summarize our Type I error results at the $0.05$ significance level in \Cref{tab0} which demonstrates that as network modularity decreases ($q$ gets larger), the hypothesis test for $H_0$ becomes anti-conservative. This is true for both least squares as well as GEE, although least squares is more adversely impacted. In addition, while GEE excels in the highly modular setting, as expected, LS is still anti-conservative. This demonstrates the efficacy of GEE as a better inferential tool over naive least squares in network regression settings, even when GEE assumptions of between-community independence do not hold. It is also instructive to note that although the P-values for independent correlation structure are generally less than those for exchangeable structure, the differences are not overwhelming.

\begin{table}[h]

\caption{ \textbf{ Type I error for the test of $H_0: \beta=0$ for continuous outcomes} Comparison of Type-I error at the $0.05$ significance level for a network of $n$ nodes with $K$ balanced communities for different choices of within community edge probability ($p$) and between community edge probability ($q$) among GEE with independent
and exchangeable correlation structure and naive least squares.}\label{tab0}
\begin{center}
\begin{tabular}{ c | c |c |c|c} 
 \hline
$(n,K)$ & $(p,q)$ & GEE (independent)& GEE (exchangeable) & LS \\ 
 \hline
 &(0.8, 0) & 0.049 & 0.055 &0.062  \\ 
$(200,20)$ &(0.7, 0.1) & 0.050 & 0.055 & 0.079 \\
 &(0.6, 0.2) & 0.058 & 0.061& 0.091 \\
 &(0.5, 0.3) & 0.072 & 0.075 & 0.095 \\
 \hline 
 &(0.8, 0) & 0.050 &0.050 &0.060  \\ 
$(400,40)$ &(0.7, 0.1) & 0.051 & 0.056& 0.075 \\
 &(0.6, 0.2) & 0.059 & 0.062 & 0.090 \\
 &(0.5, 0.3) & 0.070 & 0.075 & 0.096 \\
\hline
\end{tabular}%
\end{center}
\end{table}

\newpage

\begin{table}[h]

\caption{ \textbf{ Type I error for the test of $H_0: \beta=0$ for binary outcomes} Comparison of Type-I error at the $0.05$ significance level for a network of $n$ nodes with $K$ balanced communities for different choices of within community edge probability ($p$) and between community edge probability ($q$) among GEE with independent
and exchangeable correlation structure and GLM.}\label{tab0.1}
\begin{center}
\begin{tabular}{ c | c |c |c|c} 
 \hline
$(n,K)$ & $(p,q)$ & GEE (independent)& GEE (exchangeable) & GLM \\ 
 \hline
 &(0.8, 0) & 0.045 & 0.060 &0.049  \\ 
$(200,20)$ &(0.7, 0.1) & 0.055 & 0.055 & 0.056 \\
 &(0.6, 0.2) & 0.060 & 0.070& 0.062 \\
 &(0.5, 0.3) & 0.063 & 0.075 & 0.065 \\
 \hline 
 &(0.8, 0) & 0.050 &0.055 &0.050  \\ 
$(400,40)$ &(0.7, 0.1) & 0.060 & 0.060& 0.061 \\
 &(0.6, 0.2) & 0.059 & 0.065 & 0.063 \\
 &(0.5, 0.3) & 0.065 & 0.078 & 0.068 \\
\hline
\end{tabular}%
\end{center}
\end{table}

\subsection{Application to commercial air traffic networks}\label{Real data analysis}

Air travel networks were constructed using flight data obtained from the Official Airline Guide (OAG). The node attribute outcomes consisted of country-specific COVID-19 incidence rates by month, available from the Johns Hopkins University coronavirus data repository through April 30, 2020, \citep{CSSEGISandData} for continuous outcomes. For binary outcomes, the node attribute was financial aid in millions of US dollars for the year 2019 \citep{worldbank_aid_2022}, dichotomized and thresholded by the median. In addition we included country-specific GDP \citep{worldbank2022_gdp}, 
total population \citep{worldbank_population_2022},
and percentage of the urban population \citep{worldbank_urbanpop_2022}
of the countries from the website of the World Bank as covariates. We study (i) to assess the effectiveness of travel bans on the incidence rate of COVID-19 using our network regression model for continuous outcome in Equation \eqref{cont_outcome_eqn}, conducting a month-by-month analysis from January to April 2020, spanning the period from before the pandemic to one month after the travel restrictions were implemented, and (ii) to examine the impact of network structures on the financial aid received or provided by countries in the year 2019, using commercial flight network data from January 2020 using Equation \eqref{binary_outcome_eqn}. In addition, we study the importance of baseline covariates effects such as GDP, percentage of urban population, and population size of the countries vs the network effect. The GDP and the population of the most populated countries are in the order of $10^{12}$ and $10^6$ respectively, so we scale these variables by $10^{12}, 10^6$, and $10^2$, respectively, in order to stablize coefficient estimation.

With these data retrieved across different continents to model the network effect through our proposed model in \eqref{original_eqn1} we assume a directed stochastic block model framework, where nodes correspond to countries and each block contains countries having a large number of commercial flights traveling among them compared to the others. Edge formation is determined by thresholding the population-normalized count of flights arriving at the destination country. 

\begin{figure}
\centering
\includegraphics[width=155mm]{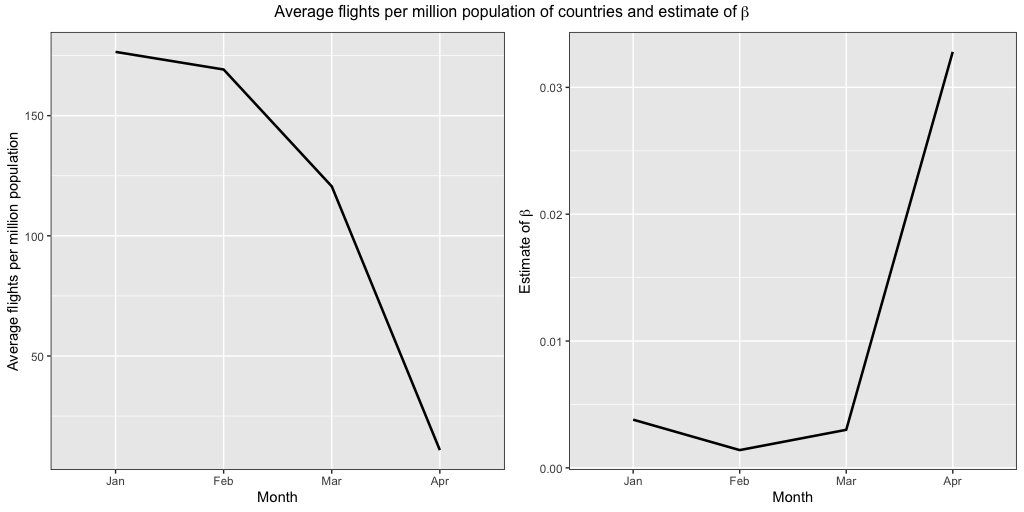}
\caption{\textbf{Changes in air travel networks and the impact of incidence rates over time.} Monthly average number of flights per million population over the countries and monthly estimate of $\beta$ for Jan-Apr 2020 demonstrate a downward and an upward trend, respectively.} \label{monthly_plot}
\end{figure}

\begin{figure}
\centering
\includegraphics[width=155mm]{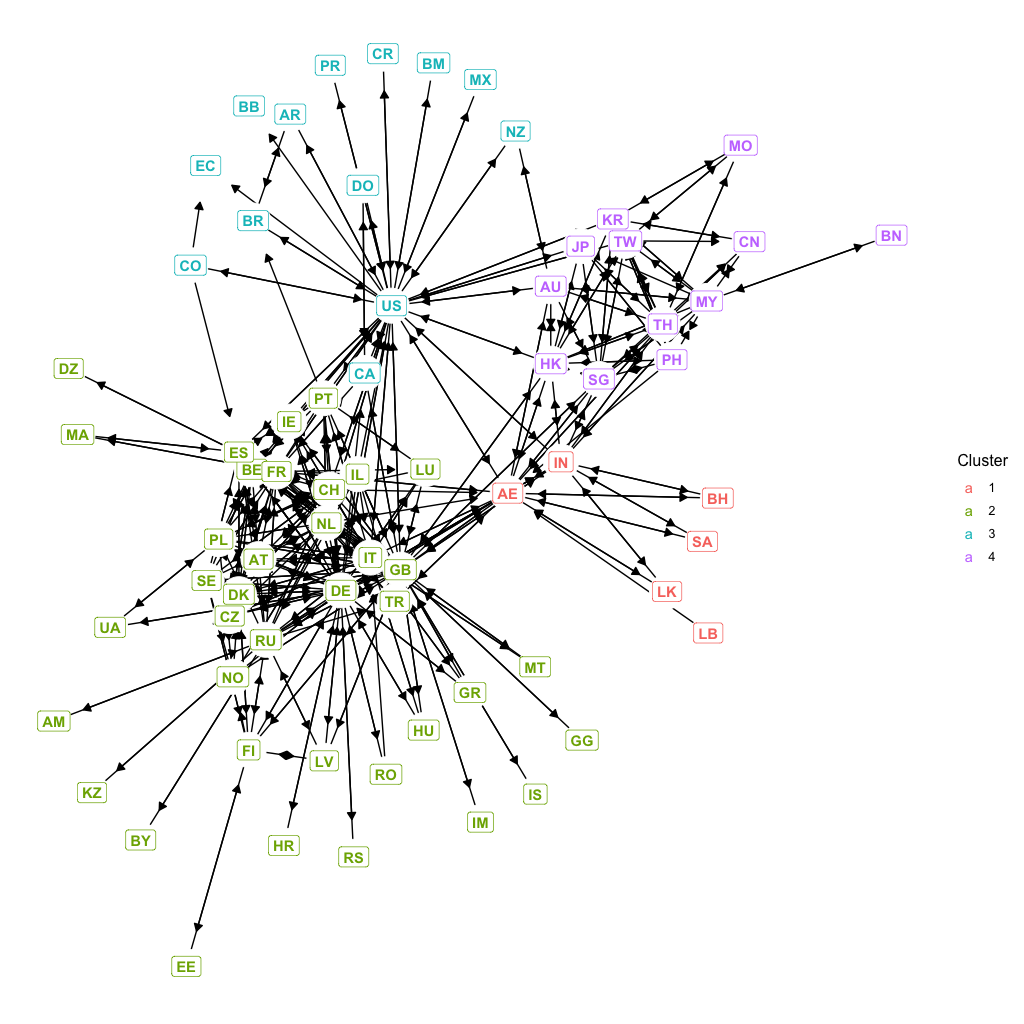}
\caption{\textbf{Country-to-country air traffic network.} The directed graph is obtained from the adjacency matrix of air traffic flow in January 2020, which has been normalized by the country's population per million and thresholded by the third quartile. Nodes with country initials (alpha-2 codes) represent the corresponding countries.} \label{cluster_view}
\end{figure}

We define the variables as follows: (i) $y_i$ represents the incidence rate in the $i$th country, measured as the number of COVID-19 cases per 1000 population. (ii) $y_i \in \{0,1\}$ is an indicator variable, denoting whether the country received financial aid in 2019, based on a median thresholding of the financial aid data. The covariates are represented by the vector $\bm x_{i}$, where $\bm x_{i}$ is of dimensions $4 \times 1$. The vector includes an intercept term, denoted by $1$, and three additional elements: $ x_{i2}$, $x_{i3}$, and $ x_{i4}$. These additional terms contain the population size, GDP, and percentage of urban population for the $i$th country, scaled by $10^6$, $10^{12}$, and $10^2$ respectively.  For constructing the adjacency matrix $\bm A$, we consider two scenarios: unweighted and weighted. For unweighted adjacency matrices, from the directed graph dictated by the number of flights in a particular month, we first construct a count matrix $\bm C$ that counts the number of flights from one country to the other. Then we construct an unweighted adjacency matrix $\bm A$ with 0s and 1s where the entry of $\bm A$ is 1 if it exceeds the third quartile of the elements of $\bm C$, and 0 otherwise. For weighted adjacency matrices we divide the elements of $\bm C$ by the total population of the destination country per million. The main rationale behind this scaling is that one would expect more flights traveling to a populated country compared to a less populated country. Therefore, dividing it by the population of the country will result in the elements of the weighted adjacency matrix being on the same scale. 
 
With the two adjacency matrices constructed in this way, we fit model \eqref{gee_objective_fn} and summarize the results in \Cref{tab1} and \Cref{tab:se_cont} for COVID-19 data and in \Cref{aid_fat_table} for the financial aid data. The estimates of $\beta$ in \Cref{tab1} increase from February to April both for the weighted and unweighed cases suggesting that there is an increasing association between the travel and spread of the pandemic. To further investigate this behavior, we plot the average number of flights per million population in \Cref{monthly_plot} which shows that although the average number of flights is decreasing from January to April, the estimates of $\beta$ are increasing. This reflects the fact that while travel bans led to a decrease in the total number of flights in March and April, the increasing $\beta$ in these months implies that each flight had an increased likelihood of transmitting COVID-19 to the destination country, increasing the correlation between the incidence rates in the two countries.

\Cref{tab1} also demonstrates that while the overall population of a country has a negligible effect on the incidence rate leading to smaller values of $\alpha_2$, the percentage of the urban population plays a crucial role especially when the travel ban has already been in effect, for example, the value of $\alpha_4$ is 62 (55 for weighted case) in April compared to 0 in February. Moreover, the corresponding values are consistent for weighted and unweighted networks.

\Cref{tab1} further demonstrates the utility of including baseline covariates alongside network effect to study the effectiveness of travel bans in mitigating the spread of COVID-19. By comparing the values of $\beta$ and $\alpha_4$, one can note that for the initial months of the pandemic, the network effect is more compared to the effect of urban population. However, when the travel ban has already taken place during March (for most of the countries), the effect of urban population supersedes the network effect as its value increase to 62.33 drastically from 9.88 suggesting that urbanity is the next important factor to consider for controlling the spread of the pandemic after the travel bans. For the financial aid data, \Cref{aid_fat_table} demonstrates that the transportation network has a significant effect on the financial aid received or given by countries compared to the other covariates involved in our study.

Next, we compare our results with naive least squares with respect to standard errors for the COVID-19 data (continuous outcomes) in \Cref{tab:se_cont} and with the generalized linear model for the financial aid data (binary outcome ) in \Cref{aid_fat_table}. Both the tables demonstrate that the standard errors of the network effect estimate $\beta$ are uniformly smaller for least squares and GLM as we expect, likely a representation of the overconfidence of the least squares model or GLM that comes from ignoring network induced correlation. This case demonstrates that using least squares would incorrectly dramatically increase both the magnitude and degree of significance of the network effects on incidence rates, particularly in the months post-lockdown when travel bans were in place. In contrast, the GEE model provides a more realistic view.

\vskip .1cm

\begin{table}[h]
\centering
\caption{\textbf{Estimates of the parameters for weighted and unweighted networks under the GEE and naive least squares model.} $\beta, \alpha_1, \alpha_2, \alpha_3$ and $\alpha_4$ correspond to the coefficients of the adjacency matrix (network effect), intercept, population, GDP, and percentage of the urban population, respectively. The estimated parameters are equal for both the GEE and naive least squares. The networks are constructed using country-to-country number of flights. The numbers in the parenthesis correspond to unweighted network obtained by thresholding the adjacency matrix  while the others correspond to weighted network obtained by scaling the number of flights by per million population of the recipient country.}
\label{tab1}
\resizebox{\textwidth}{!}{%
\begin{tabular}{c|c|c|c|c|c}
\hline
Month of 2020 &
  \begin{tabular}[c]{@{}c@{}}$\beta$\\ Weighted \\ (Unweighted)\end{tabular} &
  \begin{tabular}[c]{@{}c@{}}$\alpha_1$\\ Weighted \\ (Unweighted)\end{tabular} &
  \begin{tabular}[c]{@{}c@{}}$\alpha_2$\\ Weighted \\ (Unweighted)\end{tabular} &
  \begin{tabular}[c]{@{}c@{}}$\alpha_3$\\ Weighted \\ (Unweighted)\end{tabular} &
  \begin{tabular}[c]{@{}c@{}}$\alpha_4$\\ Weighted \\ (Unweighted)\end{tabular} \\ \hline
\multirow{8}{*}{\begin{tabular}[c]{@{}c@{}}Jan \\ \\ Feb \\ \\ Mar \\ \\ Apr \\ \end{tabular}} &
  \multirow{8}{*}{\begin{tabular}[c]{@{}c@{}}0.0038 \\ (1.8800)\\ 0.0014 \\ (2.1900)\\ 0.0030 \\ (2.7190)\\ 0.0328 \\ (4.6574)\end{tabular}} &
  \multirow{8}{*}{\begin{tabular}[c]{@{}c@{}}-0.0006 \\ (-0.0009)\\ -0.0186 \\ (-0.0239)\\ -3.7430 \\ (-3.9231)\\ -7.8275 \\ (-9.6532)\end{tabular}} &
  \multirow{8}{*}{\begin{tabular}[c]{@{}c@{}}0.0000 \\ (0.0000)\\ 0.0000 \\ (0.0000)\\ -0.0005 \\ (-0.0005)\\ -0.0123 \\ (-0.0084)\end{tabular}} &
  \multirow{8}{*}{\begin{tabular}[c]{@{}c@{}}0.0013 \\ (0.0013)\\ 0.0576 \\ (0.0556)\\ -0.0231 \\ (-0.1444)\\ 0.4685 \\(0.0626)\end{tabular}} &
  \multirow{8}{*}{\begin{tabular}[c]{@{}c@{}}0.0000 \\ (0.0000)\\ -0.0072 \\ (-0.0094)\\ 9.8862 \\ (9.8844)\\ 54.9768 \\ (62.3300)\end{tabular}} \\
 &  &  &  &  &  \\
 &  &  &  &  &  \\
 &  &  &  &  &  \\ 
 &  &  &  &  &  \\
 &  &  &  &  &  \\
 &  &  &  &  &  \\
 &  &  &  &  &  \\ \hline
\end{tabular}%
}
\end{table}


\begin{table}[]
\centering
\caption{\textbf{Comparison of the standard error between naive linear regression and GEE  for the weighted networks.} The numbers reported in the parenthesis correspond to our GEE method while the others correspond to least squares. $\beta, \alpha_1, \alpha_2, \alpha_3$ and $\alpha_4$ correspond to the coefficients of the adjacency matrix, intercept, population, GDP, and percentage of the urban population respectively. ***, **, *, ., etc. represent significance code corresponding to P-values: *** (P-value $<.001$), ** ($.001<$ P-value $<.01$), * ($.01<$ P-value $< .05$), . ($.05<$ P-value $<.1$), empty space (P-value $>.1$).}
\label{tab:se_cont}
\resizebox{\textwidth}{!}{%
\begin{tabular}{c|c|c|c|c|c}
\hline
Month of 2020 &
  \begin{tabular}[c]{@{}c@{}}$\beta$\\ SE: GEE (LS) \end{tabular} &
  \begin{tabular}[c]{@{}c@{}}$\alpha_1$\\ SE: GEE (LS)  \end{tabular} &
  \begin{tabular}[c]{@{}c@{}}$\alpha_2$\\ SE: GEE (LS)  \end{tabular} &
  \begin{tabular}[c]{@{}c@{}}$\alpha_3$\\ SE: GEE (LS)   \end{tabular} &
  \begin{tabular}[c]{@{}c@{}}$\alpha_4$\\ SE: GEE (LS)   \end{tabular} \\ \hline
Jan &
  \begin{tabular}[c]{@{}c@{}}2.73 $\times 10^{-2}$ \\ ($3.64 \times 10^{-3}$)\end{tabular} &
  \begin{tabular}[c]{@{}c@{}}$6.66 \times 10^{-4}$  \\ ($9.25 \times 10^{-4}$)\end{tabular} &
  \begin{tabular}[c]{@{}c@{}}$3.26 \times 10^{-6}$ \tiny{**} \\ ($3.58 \times 10^{-6}$) \tiny{***}\\ \end{tabular} &
  \begin{tabular}[c]{@{}c@{}}$4.7 \times 10^{-4}$ \\ ($5.00 \times 10^{-4}$)\end{tabular} &
  \begin{tabular}[c]{@{}c@{}}$1.21 \times 10^{-3}$ \\ ($1.27 \times 10^{-3}$)\end{tabular} \\ \hline
Feb &
  \begin{tabular}[c]{@{}c@{}}$3.36 \times 10^{-3}$ \\ ($3.23 \times 10^{-3}$)\end{tabular} &
  \begin{tabular}[c]{@{}c@{}}$2.94 \times 10^{-2}$ \\ ($3.77 \times 10^{-2}$)\end{tabular} &
  \begin{tabular}[c]{@{}c@{}}$1.47 \times 10^{-4}$ \tiny{**}\\ ($6.43 \times 10^{-5}$) \tiny{***}\\ \end{tabular} &
  \begin{tabular}[c]{@{}c@{}}$2.04 \times 10^{-2}$ \\ ($7.91 \times 10^{-3}$)\end{tabular} &
  \begin{tabular}[c]{@{}c@{}}$4.81 \times 10^{-2}$ \\  ($5.23 \times 10^{-2}$)\end{tabular} \\ \hline
Mar &
  \begin{tabular}[c]{@{}c@{}}$5.53 \times 10^{-4}$ \tiny{***} \\ ($2.34 \times 10^{-4}$)  \\ \end{tabular} &
  \begin{tabular}[c]{@{}c@{}}2.129425\\  (2.71420)\end{tabular} &
  \begin{tabular}[c]{@{}c@{}}$2.567 \times 10^{-3}$\\ ($4.79 \times 10^{-3}$)\end{tabular} &
  \begin{tabular}[c]{@{}c@{}}0.388691\\  (0.58001)\end{tabular} &
  \begin{tabular}[c]{@{}c@{}}3.703839 \tiny{*}\\ (3.74144) \tiny{*}  \\ \end{tabular} \\ \hline
Apr &
  \begin{tabular}[c]{@{}c@{}}$2.47 \times 10^{-3}$ \tiny{***} \\ ($7.09 \times 10^{-4}$) \tiny{***} \\ \end{tabular} &
  \begin{tabular}[c]{@{}c@{}}2.94923 \tiny{*}\\ (5.48968)   \\ \end{tabular} &
  \begin{tabular}[c]{@{}c@{}}$1.247 \times 10^{-2}$ \\ ($2.408 \times 10^{-2}$)\end{tabular} &
  \begin{tabular}[c]{@{}c@{}}0.03061  \\ (0.087040)\end{tabular} &
  \begin{tabular}[c]{@{}c@{}}14.48825  \tiny{***}\\ (23.69662) \tiny{*}  \\ \end{tabular} \\ \hline
\end{tabular}%
}
\end{table}

\newpage

\begin{table}
\centering
\caption{\textbf{Estimates of the parameters and their standard errors for weighted and unweighted networks under the GEE and GLM model for financial aid data.} The numbers reported in the parenthesis correspond to
our GEE method while the others correspond to GLM method. $\beta, \alpha_1, \alpha_2, \alpha_3$ and $\alpha_4$ correspond to the coefficients of the adjacency matrix (network effect), intercept, population, GDP, and percentage of the urban population, respectively. The estimated parameters are equal for both the GEE and GLM. **, ., etc. represent significance code corresponding to different P-values. ** ($.001<$ P-value $<.01$), * ($.01<$ P-value $< .05$), . ($.05<$ P-value $<.1$), empty space (P-value $>.1$).}
\label{aid_fat_table}
\begin{tabular}{c|cc|cc}
\hline
Coefficients &
  \multicolumn{2}{c|}{Weighted} &
  \multicolumn{2}{c}{Unweighted} \\ \hline
 &
  \multicolumn{1}{c|}{\begin{tabular}[c]{@{}c@{}}Estimates\\ GEE (GLM)\end{tabular}} &
  \multicolumn{1}{c|}{\begin{tabular}[c]{@{}c@{}}SE\\ GEE (GLM)\end{tabular}} &
  \multicolumn{1}{c|}{\begin{tabular}[c]{@{}c@{}}Estimate\\ GEE (GLM)\end{tabular}} &
  \multicolumn{1}{c}{\begin{tabular}[c]{@{}c@{}}SE\\ GEE (GLM)\end{tabular}} \\ \hline
\begin{tabular}[c]{@{}c@{}}$\beta$ \\ \\ $\alpha_1$ \\ \\ $\alpha_2$ \\ \\ $\alpha_3$ \\ \\ $\alpha_4$\end{tabular} &
  \multicolumn{1}{c|}{\begin{tabular}[c]{@{}c@{}}4.0614 \\(4.0614)\\ 0.1448 \\(0.1448)\\ 0.0194 \\(0.0194)\\ -4.0251 \\(-4.0251)\\ -2.1891 \\(-2.1891)\end{tabular}} &
  \multicolumn{1}{c|}{\begin{tabular}[c]{@{}c@{}}1.3524 \tiny{**} \\(1.2974) .\\ 0.8081 \\(1.4187)\\ 0.0120 \\(0.0131)\\ 3.2373 \\(2.8112) \\ 1.4444 \\ (2.4090)\end{tabular}} &
  \multicolumn{1}{c|}{\begin{tabular}[c]{@{}c@{}}9.4789 \\(9.4789)\\ -0.1094 \\(-0.1094)\\ 0.0167 \\(0.0167)\\ -3.8770 \\(-3.8770)\\ -0.5197 \\(-0.5197)\end{tabular}} &
  \multicolumn{1}{c}{\begin{tabular}[c]{@{}c@{}}1.5573 \tiny{*}\\(1.5212)\\ 0.0329 \\(0.0410)\\ 0.0111 \\(0.0128)\\ 2.9386 \\(3.0197)\\ 0.0634 \\(0.2043)\end{tabular}} \\ \hline
\end{tabular}%
\end{table}

\section{Discussion}\label{Discussion}

We have proposed a generalized estimating equation (GEE) approach to network regression model. Assuming independent community structure and  using the simultaneous estimation of memberships, the GEE approach allows for community dependent covariate coefficient estimation, and thereby provides a flexible and efficient solution to the network regression model. Moreover, this allows us to do a hypothesis testing of the network regression parameter $\beta$ which helps us to decide the importance of including such term in our analysis. We provided a relevant real data example of COVID-19 cases along with baseline covariates such as GDP, population size, and percentage of urban population of countries across different continents, and the number of commercial flights traveling between them to study the importance of travel bans to mitigate the spread of the COVID-19 pandemic. We have constructed the adjacency matrix from the count of flights rendered in a network of countries via a stochastic block model where each block contains countries having a similar number of flights. Our proposed model has helped us to understand the importance of the baseline covariates vs network effect. Since we have dealt with longitudinal data, it is also instructive to note that our proposed model offers us the flexibility  of clustering both in the network space and also over time. 

One limitation of our results is the balanced community assumption made in Proposition \ref{const_p_q}, which will not hold for all networks. This assumption can be relaxed somewhat with small departures from the balanced design. Further, under the stochastic block model, extremely unbalanced networks still allow for consistent estimation of $p$ and $q$, however for a more flexible network model that allows for community-specific edge probabilities, consistent estimation of edge probabilities requires each community to grow asymptotically with $n$.

\section{Appendix}\label{Appendix}

\subsection{Proof of \Cref{const_p_q}}\label{proof_const_p_q}
From the construction of the adjacency matrices, one can note that the entries $A_{ij}$s are i.i.d. Bernoulli random variables with
$$
E(A_{ij})=
\begin{cases}
p, \;\;\ \text{if } i, j \text{ belong to the same community}\\
q, \;\;\ \text{otherwise} \\
\end{cases}
$$
and 
$$
Var(A_{ij})=
\begin{cases}
p(1-p), \;\;\ \text{if } i, j \text{ belong to the same community}\\
q(1-q), \;\;\ \text{otherwise} \\
\end{cases}
$$
Denote by $S$ the set $\{i,j: 1\leq i<j\leq n, i,j \text{ belongs to the same community}\}$, and its complement by $S'$. Therefore, for a directed graph, one can write

\begin{align*}
& E\Big(\sum_{i,j\in S}A_{ij}\Big)=2K{m \choose 2}p, \\& Var\Big(\sum_{i,j\in S}A_{ij}\Big)=2K{m \choose 2}p(1-p)=s_{m_p}^2,\\
& E\Big(\sum_{i,j\in S'}A_{ij}\Big)=K
(K-1)m^2q, \text{ and }\\
&Var\Big(\sum_{i,j\in S'}A_{ij}\Big)=K
(K-1)m^2q(1-q)=s_{m_q}^2.
\end{align*}
Next, one can verify Lindeberg condition to establish the central limit theorem:

\begin{align}\label{lindeberg_cond}
& \frac{\sum_{i,j\in S}A_{ij}-2K{m \choose 2}p}{\sqrt{2K{m \choose 2}p(1-p)}} \;\;\ \text{ and }\;\; \frac{\sum_{i,j\in S'}A_{ij}-K(K-1)m^2q}{\sqrt{K(K-1)m^2q(1-q)}}\overset{d}{\rightarrow} N(0,1)
\end{align}
as $Km^2p(1-p)$ and $Km^2q(1-q) \rightarrow \infty$.\\

The Lindeberg's condition requires us to verify

$$\frac{1}{s_{m_p}^2}\sum_{i,j\in S}E\Big(B_{ij}^2I_{\{|B_{ij}|\geq \epsilon s_{m_p}\}}\Big) \rightarrow 0 \;\;\ \text{ and }\frac{1}{s_{m_q}^2}\sum_{i,j\in S'}E\Big(B_{ij}^2I_{\{|B_{ij}|\geq \epsilon s_{m_q}\}}\Big) \rightarrow 0,$$
where $B_{ij}=A_{ij}-E(A_{ij})$. Since $|B_{ij}|\leq 1$, the above condition is satisfied when $Km^2p(1-p)$ and $Km^2q(1-q)$ tend to $\infty$ under which $\epsilon s_{m_p}$ and $\epsilon s_{m_q}$ are both greater than 1.\\ 

Dividing the numerator and denominator of \eqref{lindeberg_cond} by $2K{m \choose 2}$ and $K(K-1)m^2$ respectively, one obtains both

$m^{\gamma/2}\sqrt{2K{m \choose 2}}\frac{\sum_{i,j\in S}A_{ij}/\Big(2K{m \choose 2}\Big)-p}{\sqrt{m^{\gamma}p(1-p)}} \;\;\ \text{ and }\;\; m^{\gamma/2}\sqrt{K(K-1)m^2}\frac{\sum_{i,j\in S'}A_{ij}/\Big(K(K-1)m^2\Big)-q}{\sqrt{m^{\gamma}q(1-q)}}$ both converge to $N(0,1)$ distribution.
Since $\hat{p}=\sum_{i,j\in S}A_{ij}/\Big(K{m \choose 2}\Big)$, and $\hat{q}=\sum_{i,j\in S'}A_{ij}/\Big(K(K-1)m^2\Big)$,  $\hat{p}$ and $\hat{q}$ are $m^{1+\gamma/2}K^{1/2}$ and $m^{1+\gamma/2}K$ consistent respectively.


\subsection{Proof of \Cref{asym_norm}}\label{asym_norm_pf}
Here we present the sketch of the proof of \Cref{asym_norm}. One can write \eqref{gee_objective_fn} as
\begin{align}\label{obj_fun}
& \sum_{k=1}^{K}U_k(\alpha,\bm \beta,p,q)=\sum_{k=1}^{K}\bm D_k^{\top}\bm V_k^{-1}\bm S_k=0,
\end{align}
where $\bm S_k=\bm y_k-\bm \mu_k$, and $U$ is a function of the model parameters.

Let $\bm b=(\beta,\bm \alpha)^{\top}$ denote the vector of model parameters, and $\bm \pi =(p,q)^{\top}$ denote the vector of model parameters, and vector of within and between edge probabilities respectively. Letting $\bm b$ fixed, the Taylor expansion yields
\begin{align}\label{taylor}
& \frac{\sum_{k=1}^{K}U_k(\bm b, \bm \pi^*)}{K^{1/2}m^{1+\gamma/2}}=\frac{\sum_{k=1}^{K}U_k(\bm b, \bm \pi)}{K^{1/2}m^{1+\gamma/2}}+\frac{\sum_{k=1}^{K}\frac{\partial U_k(\bm b,\bm \pi)}{\partial \pi}}{K^{1/2}m^{1+\gamma/2}}m^{1+\gamma/2}(\bm \pi^*-\bm \pi)+o_P(1)\\ \nonumber
&\hspace{2.8cm} =\tilde{A}+\tilde{B}\tilde{C}+o_P(1).
\end{align}

One can note that $\tilde{B}=o_P(1)$ as $\partial U_k(\bm b,\bm \pi)/\partial \bm \pi$  are linear functions of $\bm S_k$'s defined in \eqref{obj_fun} whose means are zero, and $\tilde{C}=O_P(1)$ thanks to \Cref{const_p_q}. Therefore, $\frac{\sum_{k=1}^{K}U_k(\bm b, \bm \pi^*)}{K^{1/2}m^{1+\gamma/2}}$ is asymptotically equivalent to $\frac{\sum_{k=1}^{K}U_k(\bm b, \bm \pi)}{K^{1/2}m^{1+\gamma/2}}$ whose asymptotic distribution is multivariate normal with zero mean and covariance is equal to $V$ as defined in \Cref{asym_norm}. The proof is thus complete following \cite{liang1986longitudinal}.

\section{Acknowledgement}
RG would like to thank Anupam Kundu, postdoctoral associate at Yale School of Public Health and Thien Le, postdoctoral fellow at Harvard for many useful discussions regarding real data processing. JPO acknowledges support from R01AI138901, and IB acknowledges support from R01MH116884.

\medskip
\bibliographystyle{apacite}
\bibliography{reference}
\end{document}